\documentclass[manuscript, table, nonacm]{acmart}

\usepackage{xcolor}
\usepackage{amsmath}
\usepackage{graphicx}
\usepackage{caption}
\usepackage{subcaption}
\usepackage{verbatim}
\usepackage{latexsym}
\usepackage{setspace}
\usepackage{mathtools}
\usepackage{bbm}
\usepackage{subcaption}
\usepackage{algorithm}
\usepackage{algpseudocode}

\pdfinfo{
/Title (Event Based Interactions in Cooperative Multi-Agent Learning Systems)
/Author (Author 1,
Author 2)
/TemplateVersion (2022.1)}

\newtheorem{definition}{Definition}
\newtheorem{theorem}{Theorem}
\newtheorem{proposition}{Proposition}
\newtheorem{problem*}{Problem}

\newtheorem{corollary}{Corollary}

\newtheorem{remark}{Remark}
\newtheorem{assumption}{Assumption}

\begin{document}
\title{Robust Event-Driven Interactions in Cooperative Multi-Agent Learning}
\author{Daniel Jarne Ornia}
\email{d.jarneornia@tudelft.nl}
\affiliation{%
  \institution{DCSC, Delft University of Technology}
  \streetaddress{Mekelweg 2}
  \city{Delft}
  \state{The Netherlands}
  \postcode{2628CD}
}
\author{Manuel Mazo Jr.}
\email{m.mazo@tudelft.nl}
\affiliation{%
  \institution{DCSC, Delft University of Technology}
  \streetaddress{Mekelweg 2}
  \city{Delft}
  \state{The Netherlands}
  \postcode{2628CD}
}
\thanks{This work was partially supported by the ERC Starting Grant SENTIENT \# 755953}
\begin{abstract}
We present an approach to reduce the communication required between agents in a Multi-Agent Reinforcement Learning system by exploiting the inherent robustness of the underlying Markov Decision Process. We compute so-called robustness surrogate functions (off-line), that give agents a conservative indication of how far their state measurements can deviate before they need to update other agents in the system with new measurements. This results in fully distributed decision functions, enabling agents to decide when it is necessary to communicate state variables. We derive bounds on the optimality of the resulting systems in terms of the discounted sum of rewards obtained, and show these bounds are a function of the design parameters. Additionally, we extend the results for the case where the robustness surrogate functions are learned from data, and present experimental results demonstrating a significant reduction in communication events between agents.
\end{abstract}
\maketitle
\thispagestyle{empty}
\section{Introduction}
In the last two decades we have seen a surge of learning-based techniques applied to the field of multi agent game theory, enabling the solution of larger and more complex problems, both model based and model free \cite{hu1998multiagent,busoniu2008comprehensive,nowe2012game}. Lately, with the wide adoption of Deep Learning techniques for compact representations of value functions and policies in model-free problems \cite{mnih2013playing,lillicrap2015continuous,van2016deep}, the field of Multi-Agent Reinforcement Learning (MARL) has seen an explosion in the applications of such algorithms to solve real-world problems \cite{lowe2017multi}. However, this has naturally led to a trend where both the amount of data handled in such data driven approaches and the complexity of the targeted problems grow exponentially. In a MARL setting where communication between agents is required, this may inevitably lead to restrictive requirements in the frequency and reliability of the communication to and from each agents (as it was already pointed out in \cite{panait2005cooperative}).

The effect of asynchronous communication in dynamic programming problems was studied already in \cite{bertsekas1982distributed}. In particular, one of the first examples of how communication affects learning and policy performance in MARL is found in \cite{tan1993multi}, where the author investigates the impact of agents sharing different combinations of state variable subsets or Q values. After that, there have been multiple examples of work studying different types of communication in MARL and what problems arise from it \cite{sen1994learning,ackley1994altruism,szer2004improving,kok2004sparse}. In this line, in \cite{zhang2013coordinating} actor coordination minimization is addressed and in \cite{foerster2016learning,kim2019learning} authors allow agents to choose a communication action and receive a reward when this improves the policies of other agents. In \cite{chen2018communication} multi agent policy gradient methods are proposed with convergence guarantees where agents communicate gradients based on some trigger conditions, and in \cite{lin2019communication} agents are allowed to communicate a simplified form of the parameters that determine their value function.

We focus particularly in a \emph{centralised training - decentralised execution}, where agents must communicate state measurements to other agents in order to execute the distributed policies. Such a problem represents most real applications of MARL systems: It is convenient to train such systems in a simulator, in order to centrally learn all agents' value functions and policies. But if the policies are to be executed in a live (real) setting, agents will have access to different sets of state variables that need to be communicated with each-other. In this case, having non-reliable communication leads to severe disruptions in the robustness of the distributed policies' performance. The authors in \cite{lin2020robustness} demonstrated experimentally how very small adversarial disruptions in state variable communications leads to a collapse of the performance of general collaborative MARL systems. In this regard, \cite{da2020uncertainty} proposes learning an ``adviser'' model to fall back on when agents have too much uncertainty in their state measurements, and more recently in \cite{karabag2022planning} the authors enable agents to run simulated copies of the environment to compensate for a disruption in the communication of state variables, and in \cite{xue2021mis} agents are trained using adversarial algorithms to achieve more robust policies. This lack of robustness in communicative multi-agent learning presents difficulties when trying to design efficient systems where the goal is to communicate less often.

With this goal in mind, we can look into event triggered control (ETC) as a strategy to reduce communication \cite{tabuada2007event, mazo2008event} in a networked system by trading off communication for robustness against state measurement deviations. This has been applied before in linear multi-agent systems \cite{6068223} and non-linear systems \cite{9683215,zhong2014event}. In \cite{vamvoudakis2018model} and \cite{sahoo2015neural} ideas on how to use ETC on model-free linear and non-linear systems were explored. Additionally in other learning problems such as \cite{solowjow2020event}, where authors show how event triggered rules can be applied to learn model parameters more efficiently, and in \cite{george2020distributed} by applying a similar principle to demonstrate how ETC can be used to compute stochastic gradient descent steps in a decentralised event-based manner.

\subsection{Main Contribution}
We consider in this work a general cooperative MARL scenario where agents have learned distributed policies that must be executed in an on-line scenario, and that depend on other agent's measurements. We propose a constructive approach to synthesise communication strategies that minimise the amount of communication required and guarantee a minimum performance of the MARL system in terms of cumulative reward when compared to an optimal policy. We construct so-called \emph{robustness surrogate} functions, which quantify the robustness of the system to disturbances in agent state variable measurements, allowing for less communication in more robust state regions. Additionally, we consider the case where these surrogate functions are learned through the \emph{scenario approach} \cite{campi2020scenario,calafiore2006scenario}, and show how the guarantees are adapted for learned approximated functions.
\section{Preliminaries}
\subsection{Notation}
We use calligraphic letters for sets and regular letters for functions $f:\mathbb{R}^m\to\mathbb{R}^n$.  We say a function $f: \mathbb{R}_+\to \mathbb{R}_+$ is $f\in\mathcal{K}_\infty$ if it is continuous, monotonically increasing and $f(0)=0,\,\,\lim_{a\to\infty}f(a) = \infty$.
We use $\mathcal{F}$ as the $\sigma$-algebra of events in a probability space, and $P$ as a probability measure $P:\mathcal{F}\to [0,1]$. We use $E[\cdot]$ and $\operatorname{Var}[\cdot]$ for the expected value and the variance of a random variable. 
We use $\|\cdot\|_\infty$ as the sup-norm, $|\cdot |$ as the absolute value or the cardinality, and $\langle v,u\rangle$ as the inner product between two vectors. We say a random process $X_n$ converges to a random variable $X$ \emph{almost surely} (a.s.) as $t\to\infty$ if it does so with probability one for any event $\omega\in\mathcal{F}$. For a conditional expectation, we write $E[X|Y]\equiv E_{Y}[X]$.

\subsection{MDPs and Multi-Agent MDPs}
We first present the single agent MDP formulation. 
\begin{definition}\label{def:MDP}[Markov Decision Process]
A Markov Decision Process (MDP) is a tuple $(\mathcal{X},\mathcal{U},P,r)$ where $\mathcal{X}$ is a set of states, $\mathcal{U}$ is a set of actions, $P: \mathcal{U}\to \mathbb{P}^{|\mathcal{X}|\times|\mathcal{X}|}$ is a probability measure of the transitions between states and $r:\mathcal{X}\times \mathcal{U}\times\mathcal{X}\to\mathbb{R}$ is a reward function, such that $r(\mathbf{x},u,\mathbf{x}')$ is the reward obtained when action $u$ is performed at state $\mathbf{x}$ and leads to state $\mathbf{x}'$. 
\end{definition}
In general, an agent has a policy $\pi:\mathcal{X}\to\mathbb{P}({\mathcal{U}})$, that maps the states to a probability vector determining the chance of executing each action. We can extend the MDP framework to the case where multiple agents take simultaneous actions on an MDP. For the state transition probabilities we write in-distinctively $P_{\mathbf{x}\mathbf{x}'}(u)\equiv P(\mathbf{x},\mathbf{x}',u)$, and for the reward obtained in two consecutive states $\mathbf{x}_t,\mathbf{x}_{t+1}$ we will write $r_t\equiv r(\mathbf{x}_t,u_t,\mathbf{x}_{t+1})$.
\begin{definition}\label{def:MMDP}[Collaborative Multi-Agent MDP]
A Collaborative Multi-Agent Markov Decision Process (c-MMDP) is a tuple $(\mathcal{N},\mathcal{X},\mathcal{U}^n,P,r)$ where $\mathcal{N}$ is a set of $n$ agents, $\mathcal{X}$ is a cartesian product of \emph{metric} state spaces $\mathcal{X} = \prod_{i\in\mathcal{N}}\mathcal{X}_i$, $\mathcal{U}^n=\prod_{i\in\mathcal{N}}\mathcal{U}_i$ is a joint set of actions, $P: \mathcal{U}^n\to \mathbb{P}^{\mathcal{X}\times\mathcal{X}}$ is a probability measure of the transitions between states and $r:\mathcal{X}\times \mathcal{U}^n\times\mathcal{X}\to\mathbb{R}$ is a reward function. 
\end{definition}
\begin{assumption}\label{as:1}
We assume that each agent $i$ has access to a set $\mathcal{X}_{i}\subset \mathcal{X}$, such that the observed state for agent $i$ is  $\mathbf{x}(i)\in\mathcal{X}_{i}$. That is, the global state at time $t$ is $\mathbf{x}_t=(\mathbf{x}_t({1})\,\,\mathbf{x}_t({2})\,\,...\,\mathbf{x}_t({n}))^T$. Furthermore, we assume that the space $\mathcal{X}$ accepts a sup-norm $\|\cdot\|_{\infty}$.
\end{assumption}
In the c-MMDP case, we can use $U\in \mathcal{U}^n$ to represent a specific joint action $U:=\{U(1), U(2),..., U(n)\}$, and $\Pi := \{\pi_{1},\pi_{2},...,\pi_{n}\}$ to represent the joint policies of all agents such that $\Pi:\mathcal{X}\to\mathcal{U}^n$ . We assume in this work that agents have a common reward function, determined by the joint action. That is, even if agents do not have knowledge of the actions performed by others, the reward they observe still depends on everyone's joint action. Additionally, we assume in the c-MMDP framework that the control of the agents is fully distributed, with each agent having its own 
(deterministic) policy $\pi_{i}$ that maps the global state to the individual action, i.e. $\pi_{i}:\mathcal{X}\to\mathcal{U}_i$.
We define the optimal policy in an MDP as the policy $\pi^*$ that maximises the expected discounted reward sum $E[\sum_{t=1}^{\infty}\gamma^t r_t\, | \pi,\mathbf{x}_0]$ $\forall \mathbf{x}_0 \in \mathcal{X}$ over an infinite horizon, for a given discount $\gamma\in (0,1)$. 
The optimal joint (or \emph{centralised}) policy in a c-MMDP is the joint policy $\Pi^*$ that maximises the discounted reward sum in the ``centrally controlled''  MDP, and this policy can be decomposed in a set of agent-specific optimal policies $\Pi^* = \{\pi^{*}_1,\pi^{*}_2,...,\pi^{*}_n\}$.
\begin{remark}\label{rem:1}
Assumption \ref{as:1} is satisfied in most MARL problems where the underlying MDP represents some real physical system (\emph{e.g.} robots interacting in a space, autonomous vehicles sharing roads, dynamical systems where the state variables are metric...). In the case where $\mathcal{X}$ is an abstract discrete set, we can still assign a trivial bijective map $I:\mathcal{X}\to \mathbb{N}$ and compute distances on the mapped states $\|\mathbf{x}_1-\mathbf{x}_2\|_{\infty}\equiv\|I(\mathbf{x}_1)-I(\mathbf{x}_2)\|_{\infty}$. However, we may expect the methods proposed in this work to have worse results when the states are artificially numbered, since the map $I$ may have no relation with the transition probabilities (we come back to this further in the work).
\end{remark}
\subsection{Value Functions and Q-Learning in c-MMDPs}
Consider a c-MMDP, and let a value function under a joint policy $\Pi$, $V^{\Pi}:\mathcal{X}\to\mathbb{R}$ be $V^{\Pi}(\mathbf{x}) = \sum_{\mathbf{x}'}P_{\mathbf{x}\mathbf{x}'}(\Pi(\mathbf{x}))(r(\mathbf{\mathbf{x}},\Pi(\mathbf{x}),\mathbf{x}')+\gamma V^{\Pi}(\mathbf{x}'))$. There exists an optimal value function $V^*$ for a centralised controller in a c-MMDP that solves the Bellman equation:
\begin{equation*}
V^*(\mathbf{x}) := \max_{U}\sum_{\mathbf{x}'}P_{\mathbf{x}\mathbf{x}'}(U)(r(\mathbf{x},U,\mathbf{x}') + \gamma V^*(\mathbf{x}')).
\end{equation*}
Now consider so-called Q functions $Q:\mathcal{X}\times\mathcal{U}^n\to\mathbb{R}$ on the centrally controlled c-MMDP \cite{watkins1992q}, such that the optimal Q function satisfies
\begin{equation*}
Q^*(\mathbf{x},U) :=\sum_{\mathbf{x}'}P_{\mathbf{x}\mathbf{x}'}(U)(r(\mathbf{x},U,\mathbf{x}') + \gamma \max_{U'}Q^*(\mathbf{x}',U')),
\end{equation*}
and the optimal centralised policy is given by $U^* :=\pi^*(\mathbf{x}) = \operatorname{argmax}_U Q^*(\mathbf{x},U)$. 
Additionally, $ \max_{U}Q^*(\mathbf{x},U) = V^*(\mathbf{x}) =E[\sum_{t=1}^{\infty}\gamma^t r(\mathbf{x},\Pi^*(\mathbf{x}),\mathbf{x}')\, |\mathbf{x}_0]$. 

\section{Information sharing between collaborative agents: Problem Formulation}
Consider now the case of a c-MMDP where each agent has learned a distributed policy $\pi_{i}:\mathcal{X}\to\mathcal{U}_i$. We are interested in the scenario where the state variable $\mathbf{x}_t\in\mathcal{X}$ at time $t$ is composed by a set of joint observations from all agents, and these observations need to be communicated to other agents for them to compute their policies. 
\begin{assumption}
Agents have a set of optimal policies $\Pi^*$ available for executing on-line and a global optimal function $Q^*$ (learned as a result of e.g. a multi-agent actor critic algorithm \cite{lowe2017multi}).
\end{assumption}
Consider the case where at a given time $t$, a subset of agents $\hat{\mathcal{N}}_t\subseteq \mathcal{N}$ does not share their state measurements with other agents. Let $t_i$ be the last time agent $i$ transmitted its measurement. We define $\hat{\mathbf{x}}_t\in\mathcal{X}$ as 
\begin{equation}
\hat{\mathbf{x}}_t := \left(\mathbf{x}_{t_1}(1),\mathbf{x}_{t_2}(2),...,\mathbf{x}_{t_n}(n)\right).
\end{equation}
That is, $\hat{\mathbf{x}}_t$ is the last known state corresponding to the collection of agent states last transmitted, at time $t$.
Then, the problem considered in this work is as follows.
\begin{problem*}
Consider a c-MMDP with a set of optimal shared state policies $\Pi^*$. Synthesise strategies that minimise the communication events between agents and construct distributed policies $\hat{\Pi}$ that keep the expected reward within some bounds of the optimal rewards, these bounds being a function of design parameters.
\end{problem*}

\section{Efficient Communication Strategies}
To solve the problem of minimizing communication, we can first consider a scenario where agents can request state measurements from other agents. Consider a c-MMDP where agents have optimal policies $\Pi^*$. If agents are allowed to request state observations from other agents at their discretion, a possible algorithm to reduce the communication when agents execute their optimal policies is to use sets of neighbouring states $\mathcal{D}:\mathcal{X}\to 2^{\mathcal{X}}$ such that $\mathcal{D}(\mathbf{x})=\{\mathbf{x}':\|\mathbf{x}-\mathbf{x}'\|\leq d\}$ for some maximum distance $d$. Agents could compute such sets for each point in space, and request information from others only if the optimal action changes for any state $\mathbf{x}'\in\mathcal{D}(\mathbf{x})$. 
This approach, however, is not practical on a large scale multi agent system. First, it requires agents to request information, which could already be considered a communication event and therefore not always desirable. Additionally, computing the sets ``on the fly'' has a complexity of $\mathcal{O}(|\mathcal{X}|^2)$ in the worst case, and it has to be executed at every time-step by all agents. We therefore propose an approach to reduce communication in a MARL system where agents do not need to request information, but instead send messages (or not) to other agents based on some triggering rule.
\subsection{Event-Driven Interactions}
To construct an efficient communication strategy based on a distributed triggering approach, let us first define a few useful concepts.
In order to allow agents to decide when is it necessary to transmit their own state measurements, we define the \emph{robustness indicator} $\Gamma:\mathcal{X}\to \mathbb{R}_{\geq 0}$ as follows.
\begin{definition}\label{def:gam}
For a c-MMDP with optimal global $Q^*:\mathcal{X}\times\mathcal{U}^n\to\mathbb{R}$, we define the \emph{robustness surrogate} $\Gamma_\alpha:\mathcal{X}\to \mathbb{R}_{\geq 0}$ with sensitivity parameter $\alpha\in\mathbb{R}_{\geq 0}$ as:
\begin{equation*}\begin{aligned}
\Gamma_\alpha (\mathbf{x}):=\max\{d\,\lvert \, & \forall \mathbf{x}':\|\mathbf{x}'-\mathbf{x}\|_{\infty}\leq d\Rightarrow\\
\Rightarrow &Q^*(\mathbf{x}',\Pi^*(\mathbf{x}))\geq V^*(\mathbf{x}')-\alpha\}.
\end{aligned}
\end{equation*}
\end{definition}
The function $\Gamma_\alpha$ gives a maximum distance (in the sup-norm) such that for any state $\mathbf{x}'$ which is $\Gamma_\alpha$ close to $\mathbf{x}$ guarantees the action $\Pi^*(\mathbf{x})$ has a $Q$ value which is $\alpha$ close to the optimal value in $\mathbf{x}'$. A representation can be seen in Figure \ref{fig:gam}. 
\begin{figure}[t!]
     \centering
         \includegraphics[width=0.5\linewidth]{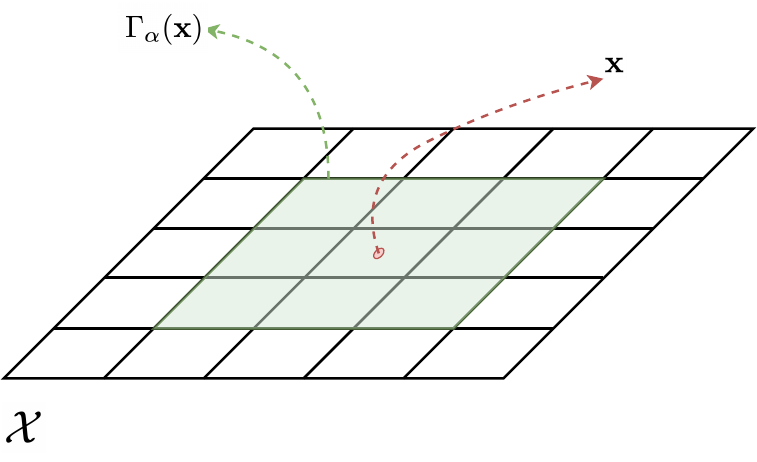}
  \caption{Robustness surrogate representation.}
        \label{fig:gam}
\end{figure}
Computing the function $\Gamma_\alpha$ in practice may be challenging, and we cover this in detail in following sections.
\begin{algorithm}
\caption{Self-Triggered state sharing}\label{al:2}
\begin{algorithmic}
\State Initialise $\mathcal{N}$ agents at $\mathbf{x}_0$;
\State Initialise last-known state vector $\hat{\mathbf{x}}_0 = \mathbf{x}_0 ,\,\,\,i\in\mathcal{N}$
\State $t=0$, 
\While{$t<t_{max}$}
\For{$i\in\mathcal{N}$}
\If{$\|\mathbf{x}_t({i})-\hat{\mathbf{x}}_{t-1}(i)\|_{\infty}>\Gamma_\alpha(\hat{\mathbf{x}}_{t-1})$}
\State $\hat{\mathbf{x}}_{t}(i)\leftarrow \mathbf{x}_t({i)}$
\State Send updated $\hat{\mathbf{x}}_{t}(i)$ to all $\mathcal{N}_{-i}$;
\EndIf
\State Execute action $\hat{U}^*_i=\pi_i^*(\hat{\mathbf{x}})$;
\EndFor
\State $t++$;
\EndWhile
\end{algorithmic}
\end{algorithm}
\begin{proposition}\label{prop:2}
Consider a c-MMDP communicating and acting according to Algorithm \ref{al:2}. Let $\hat{\mathbf{x}}^i_t$ be the last known joint state stored by agent $i$ at time $t$, and $\mathbf{x}_t$ be the true state at time $t$. Then, it holds:
\begin{equation}\label{eq:prop1}
\hat{\mathbf{x}}^i_t=\hat{\mathbf{x}}_t\,\,\,\forall i\in\mathcal{N},
\end{equation}
\begin{equation}\label{eq:prop2}
\|\hat{\mathbf{x}}_t-\mathbf{x}_t\|_{\infty}\leq \Gamma_{\alpha}(\hat{\mathbf{x}}_t)\,\, \forall t.
\end{equation}
\end{proposition}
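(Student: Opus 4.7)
The plan is to handle the two claims independently: (3) follows from the broadcast structure of Algorithm~\ref{al:2}, and (4) from the contrapositive of the triggering inequality.

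For (3), I would induct on $t$. The base case is immediate since every agent executes the identical initialisation $\hat{\mathbf{x}}^i_0 = \mathbf{x}_0$. For the inductive step, suppose all agents hold the same vector $\hat{\mathbf{x}}_{t-1}$. Each agent $i$ then evaluates its trigger from quantities that are either common (the shared $\hat{\mathbf{x}}_{t-1}$) or local (its own $\mathbf{x}_t(i)$), and whenever the trigger fires it broadcasts the new component to every other agent. Components whose trigger fired are therefore overwritten identically across agents; components whose trigger did not fire remain equal to the common $\hat{\mathbf{x}}_{t-1}(j)$ everywhere. Concatenating over $i$, every agent ends the step holding the same $\hat{\mathbf{x}}_t$.

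For (4), I would argue component by component, exploiting the sup-norm. Fix $i$: either its trigger fired, giving $\hat{\mathbf{x}}_t(i) = \mathbf{x}_t(i)$ and zero $i$-th deviation; or it did not, in which case the negation of the trigger condition together with $\hat{\mathbf{x}}_t(i) = \hat{\mathbf{x}}_{t-1}(i)$ gives $\|\mathbf{x}_t(i) - \hat{\mathbf{x}}_t(i)\|_\infty \leq \Gamma_\alpha(\hat{\mathbf{x}}_{t-1})$. Taking the maximum over $i$ yields $\|\mathbf{x}_t - \hat{\mathbf{x}}_t\|_\infty \leq \Gamma_\alpha(\hat{\mathbf{x}}_{t-1})$.

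The main obstacle is closing the gap to the right-hand side $\Gamma_\alpha(\hat{\mathbf{x}}_t)$ that actually appears in the statement. When no agent triggered at time $t$, $\hat{\mathbf{x}}_t = \hat{\mathbf{x}}_{t-1}$ and the two coincide. When some triggers did fire, the updated components carry zero error, so the bound should only tighten; but formally upgrading $\Gamma_\alpha(\hat{\mathbf{x}}_{t-1})$ to $\Gamma_\alpha(\hat{\mathbf{x}}_t)$ in general demands either a structural property of $\Gamma_\alpha$ along refinements of $\hat{\mathbf{x}}$ or an interpretation of the algorithm in which the trigger is re-evaluated on the post-update state by construction. I would flag this point explicitly, and if it cannot be resolved cleanly I would state the result with $\Gamma_\alpha(\hat{\mathbf{x}}_{t-1})$ on the right-hand side, which is exactly the value each agent has just used to decide not to transmit and which suffices for the downstream optimality bounds.
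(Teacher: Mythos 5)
Your argument for \eqref{eq:prop1} is the same as the paper's (the paper merely asserts it ``by construction''; your induction on $t$ with the broadcast step is the honest version of that assertion and is correct). Your component-wise argument for \eqref{eq:prop2} is also correct as far as it goes: it establishes $\|\mathbf{x}_t-\hat{\mathbf{x}}_t\|_\infty\leq \Gamma_\alpha(\hat{\mathbf{x}}_{t-1})$. The gap you flag in upgrading the right-hand side to $\Gamma_\alpha(\hat{\mathbf{x}}_t)$ is genuine and is \emph{not} resolved by the paper: its proof simply declares the property to hold by construction (and even states the final inequality with the sign reversed). When a strict subset of agents triggers at time $t$, the reference point jumps to a new $\hat{\mathbf{x}}_t\neq\hat{\mathbf{x}}_{t-1}$, the non-triggered components retain errors up to $\Gamma_\alpha(\hat{\mathbf{x}}_{t-1})$, and nothing prevents $\Gamma_\alpha(\hat{\mathbf{x}}_t)<\Gamma_\alpha(\hat{\mathbf{x}}_{t-1})$ (e.g.\ the new reference may land in a fragile region where $\Gamma_\alpha$ is zero). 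So \eqref{eq:prop2} as stated does not follow from Algorithm \ref{al:2} without an extra mechanism; you were right not to paper over this.

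One correction to your closing remark: retreating to $\Gamma_\alpha(\hat{\mathbf{x}}_{t-1})$ on the right-hand side does \emph{not} suffice for the downstream bound. Theorem \ref{the:1} needs $Q^*(\mathbf{x}_t,\Pi^*(\hat{\mathbf{x}}_t))\geq V^*(\mathbf{x}_t)-\alpha$, and Definition \ref{def:gam} ties the radius to the point at which the policy is evaluated: you may only conclude this from $\|\mathbf{x}_t-\hat{\mathbf{x}}_t\|_\infty\leq\Gamma_\alpha(\hat{\mathbf{x}}_t)$ with \emph{matching} arguments, whereas $\Gamma_\alpha(\hat{\mathbf{x}}_{t-1})$ only certifies the action $\Pi^*(\hat{\mathbf{x}}_{t-1})$ on the ball around $\hat{\mathbf{x}}_{t-1}$ --- a ball that the triggered components have, by definition of the trigger, just left. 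A clean repair is to iterate the trigger within each time step: after the broadcasts, every non-triggered agent re-evaluates $\|\mathbf{x}_t(i)-\hat{\mathbf{x}}(i)\|_\infty>\Gamma_\alpha(\hat{\mathbf{x}})$ against the updated reference and transmits if it now fails; since transmitted components have zero error and never re-trigger, this reaches a fixed point in at most $n$ rounds, and at the fixed point \eqref{eq:prop2} holds with $\Gamma_\alpha(\hat{\mathbf{x}}_t)$ exactly as required.
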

\begin{proof}
Properties \eqref{eq:prop1} and \eqref{eq:prop2} hold by construction. First, all agents update their own $\hat{\mathbf{x}}^i_t$ based on the received communication, therefore all have the same last-known state. Second, whenever the condition $\|\mathbf{x}_t({i})-\hat{\mathbf{x}}_{t-1}(i)\|_{\infty}>\Gamma_\alpha(\hat{\mathbf{x}})$ is violated, agent $i$ transmits the new state measurement to others, and $\hat{\mathbf{x}}_t$ is updated. Therefore $\|\mathbf{x}_t-\hat{\mathbf{x}}_t\|_{\infty}>\Gamma_\alpha(\hat{\mathbf{x}}_t)$ holds for all times.
\qed
\end{proof}
Now let us use $\hat{r}_t=r(\mathbf{x}_t,\Pi^*(\hat{\mathbf{x}}_t),\mathbf{x}_{t+1})$ as the reward obtained when using the delayed state $\hat{\mathbf{x}}_t$ as input for the optimal policies. We then present the following result.
\begin{theorem}\label{the:1}
Consider a c-MMDP and let agents apply Algorithm \ref{al:2} to update the delayed state vector $\hat{\mathbf{x}}_t$. Then it holds $\forall \mathbf{x}_0\in\mathcal{X}$:
\begin{equation*}
E_{\mathbf{x}_0}[\sum_{t=0}^{\infty}\gamma^{t}r(\mathbf{x}_t,\Pi^*(\hat{\mathbf{x}}_t),\mathbf{x}_{t+1})]\geq  V^*(\mathbf{x}_0)-\alpha \frac{\gamma}{1-\gamma}.
\end{equation*}
\end{theorem}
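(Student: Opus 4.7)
The plan is to chain together two ingredients: the invariant from Proposition \ref{prop:2} that $\|\mathbf{x}_t-\hat{\mathbf{x}}_t\|_{\infty}\leq \Gamma_\alpha(\hat{\mathbf{x}}_t)$, and the defining property of $\Gamma_\alpha$ which then converts this state-measurement closeness into a one-step suboptimality bound. First I would instantiate Definition \ref{def:gam} at the centre $\hat{\mathbf{x}}_t$ with the perturbed point $\mathbf{x}_t$, which immediately yields the pointwise bound
\begin{equation*}
Q^*(\mathbf{x}_t,\Pi^*(\hat{\mathbf{x}}_t)) \;\geq\; V^*(\mathbf{x}_t)-\alpha\qquad \forall t\geq 0.
\end{equation*}
Unfolding $Q^*$ via its Bellman identity gives $Q^*(\mathbf{x}_t,\Pi^*(\hat{\mathbf{x}}_t)) = E[\hat{r}_t+\gamma V^*(\mathbf{x}_{t+1})\mid \mathbf{x}_t,\hat{\mathbf{x}}_t]$, which is exactly the quantity I want to telescope.

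Next I would iterate this inequality. A key observation is that at $t=0$ the algorithm initialises $\hat{\mathbf{x}}_0=\mathbf{x}_0$, so $\Pi^*(\hat{\mathbf{x}}_0)=\Pi^*(\mathbf{x}_0)$ and $Q^*(\mathbf{x}_0,\Pi^*(\hat{\mathbf{x}}_0)) = V^*(\mathbf{x}_0)$ with no $\alpha$ slack; this is what produces the $\gamma/(1-\gamma)$ rather than $1/(1-\gamma)$ factor in the final bound. Applying the pointwise inequality inductively, taking expectations under the trajectory distribution induced by $\Pi^*(\hat{\mathbf{x}}_t)$, and using the tower property of conditional expectation, I obtain the finite-horizon telescoping
\begin{equation*}
V^*(\mathbf{x}_0) \;\leq\; E_{\mathbf{x}_0}\!\left[\sum_{t=0}^{T}\gamma^{t}\hat{r}_t\right] + \gamma^{T+1}E_{\mathbf{x}_0}[V^*(\mathbf{x}_{T+1})] + \alpha\sum_{t=1}^{T}\gamma^{t}.
\end{equation*}

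Finally I would let $T\to\infty$. Assuming the rewards (and hence $V^*$) are bounded, which is standard for discounted infinite-horizon MDPs, the tail term $\gamma^{T+1}E[V^*(\mathbf{x}_{T+1})]$ vanishes, and the geometric series $\alpha\sum_{t=1}^{\infty}\gamma^{t}$ sums to $\alpha\gamma/(1-\gamma)$, producing exactly the claimed bound. The main obstacle I foresee is a bookkeeping one rather than a conceptual one: being careful that the one-step inequality holds with the correct conditioning (it must hold pointwise in $\mathbf{x}_t,\hat{\mathbf{x}}_t$, which is why Proposition \ref{prop:2}'s almost-sure invariant is needed, not just an in-expectation version), and justifying the dominated-convergence passage that discards the $\gamma^{T+1}V^*(\mathbf{x}_{T+1})$ term. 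Once those are in place, the statement follows from a single rearrangement.
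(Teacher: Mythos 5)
Your proposal is correct and follows the same overall strategy as the paper: combine the invariant of Proposition \ref{prop:2} with Definition \ref{def:gam} to obtain the pointwise one-step inequality $Q^*(\mathbf{x}_t,\Pi^*(\hat{\mathbf{x}}_t))\geq V^*(\mathbf{x}_t)-\alpha$, unfold $Q^*$ via the Bellman identity, and telescope. The paper organizes the telescoping as an infinite recursion on $\hat{V}(\mathbf{x}_t)-V^*(\mathbf{x}_t)$, whereas you truncate at horizon $T$ and pass to the limit with a vanishing tail term; these are interchangeable, and your explicit attention to conditioning and to discarding $\gamma^{T+1}E[V^*(\mathbf{x}_{T+1})]$ is sound. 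The one substantive difference is your observation that $\hat{\mathbf{x}}_0=\mathbf{x}_0$, so the $t=0$ step incurs no $\alpha$ slack and the accumulated error is $\alpha\sum_{t=1}^{\infty}\gamma^{t}=\alpha\gamma/(1-\gamma)$. The paper instead applies the $-\alpha$ slack at every step including $t=0$, arrives at $V^*(\mathbf{x}_0)-\alpha\sum_{k=0}^{\infty}\gamma^{k}$, and then substitutes $\sum_{k=0}^{\infty}\gamma^{k}=\gamma/(1-\gamma)$, which is an algebra slip since that sum equals $1/(1-\gamma)$. As literally written, the paper's chain of inequalities therefore only establishes the weaker bound $V^*(\mathbf{x}_0)-\alpha/(1-\gamma)$; your $t=0$ exactness argument is precisely what is needed to obtain the constant claimed in the theorem. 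In short, your route not only matches the paper's proof in spirit but actually repairs it.
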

\begin{proof}
From Proposition \ref{prop:2}, $\|\mathbf{x}_t-\hat{\mathbf{x}}_t\|_{\infty}\leq \Gamma_\alpha(\hat{\mathbf{x}}_t)\,\forall t$, and
 recalling the expression for the optimal $Q$ values:
\begin{equation}\label{eq:lem22}
Q^*(\mathbf{x}_t,\Pi^*(\hat{\mathbf{x}}_t))=E_{\mathbf{x}_t}[\hat{r}_t + \gamma V^*(\mathbf{x}_{t+1})]\geq V^*(\mathbf{x}_t)-\alpha .
\end{equation}
Now let $\hat{V}(\mathbf{x}_0):=E_{\mathbf{x}_0}[\sum_{t=0}^{\infty}\gamma^{t}\hat{r}_t]$ be the value of the policy obtained from executing the actions $\Pi^*(\hat{\mathbf{x}}_t)$. Then:
\begin{equation}\label{eq:lem23}\begin{aligned}
E_{\mathbf{x}_0}&[\sum_{t=0}^{\infty}\gamma^{t}\hat{r}_t] =E_{\mathbf{x}_0}[\hat{r}_0 + \gamma V^{\hat{\Pi}}(\mathbf{x}_1)]=\\
=&E_{\mathbf{x}_0}[\hat{r}_0 + \gamma \hat{V}(\mathbf{x}_1)+\gamma V^*(\mathbf{x}_1)-\gamma V^*(\mathbf{x}_1) ]=\\
=&E_{\mathbf{x}_0}[\hat{r}_0 + \gamma V^*(\mathbf{x}_1)]+\gamma E_{\mathbf{x}_0}[\hat{V}(\mathbf{x}_1)-V^*(\mathbf{x}_1)].
\end{aligned}
\end{equation}
Then, substituting \eqref{eq:lem22} in \eqref{eq:lem23}:
\begin{equation}\label{eq:lem24}\begin{aligned}
E_{\mathbf{x}_0}[\sum_{t=0}^{\infty}\gamma^{t}\hat{r}_t]\geq &V^*(\mathbf{x}_0)-\alpha +\gamma E_{\mathbf{x}_0}[\hat{V}(\mathbf{x}_1)-V^*(\mathbf{x}_1)].
\end{aligned}
\end{equation}
Now, observe we can apply the same principle as in \eqref{eq:lem23} for the last term in \eqref{eq:lem24},
\begin{equation}\label{eq:lem25}\begin{aligned}
&\hat{V}(\mathbf{x}_1)-V^*(\mathbf{x}_1)=E_{\mathbf{x}_1}[\hat{r}_1+\gamma \hat{V}(\mathbf{x}_2)]-V^*(\mathbf{x}_1)=\\
&=Q^*(\mathbf{x}_1,\Pi^*(\hat{\mathbf{x}}_1))+\gamma E_{\mathbf{x}_1}[\hat{V}(\mathbf{x}_2)-V^*(\mathbf{x}_2)]-V^*(\mathbf{x}_1)\geq\\
&\geq V^*(\mathbf{x}_1)-\alpha -V^*(\mathbf{x}_1)+\gamma E_{\mathbf{x}_1}[\hat{V}(\mathbf{x}_2)-V^*(\mathbf{x}_2)]=\\
&=-\alpha +\gamma E_{\mathbf{x}_1}[\hat{V}(\mathbf{x}_2)-V^*(\mathbf{x}_2)].
\end{aligned}
\end{equation}
Substituting \eqref{eq:lem25} in \eqref{eq:lem24}:
\begin{equation}\label{eq:lem26}\begin{aligned}
E_{\mathbf{x}_0}[\sum_{t=0}^{\infty}\gamma^{t}\hat{r}_t] \geq V^*(\mathbf{x}_0)-\alpha -\gamma\alpha +\gamma^2 E_{\mathbf{x}_0}[E_{\mathbf{x}_1}[\hat{V}(\mathbf{x}_2)-V^*(\mathbf{x}_2)]].
\end{aligned}
\end{equation}
Now it is clear that, applying \eqref{eq:lem25} recursively:
\begin{equation}\label{eq:lem27}\begin{aligned}
E_{\mathbf{x}_0}&[\sum_{t=0}^{\infty}\gamma^{t}\hat{r}_t] \geq V^*(\mathbf{x}_0)-\alpha -\\
&-\gamma\alpha +\gamma^2 E_{\mathbf{x}_0}[E_{\mathbf{x}_1}[\hat{V}(\mathbf{x}_2)-V^*(\mathbf{x}_2)]]\geq V^*(\mathbf{x}_0)-\alpha \sum_{k=0}^{\infty}\gamma^{k}.
\end{aligned}
\end{equation}
Substituting $\sum_{k=0}^{\infty}\gamma^{k}=\frac{\gamma}{1-\gamma}$ in \eqref{eq:lem27}:
\begin{equation*}\begin{aligned}
E_{\mathbf{x}_0}&[\sum_{t=0}^{\infty}\gamma^{t}\hat{r}_t] \geq V^*(\mathbf{x}_0)-\alpha \frac{\gamma}{1-\gamma}.
\end{aligned}
\end{equation*}
\qed
\end{proof}
\section{Robustness Surrogate and its Computation}\label{sec:data}
The computation of the robustness surrogate $\Gamma_\alpha$ may not be straight forward. When the state-space of the c-MMDP is metric, we can construct sets of neighbouring states for a given $\mathbf{x}$. Algorithm \ref{al:data} produces an exact computation of the robustness surrogate $\Gamma_\alpha$ for a given c-MMDP and point $\mathbf{x}$.
\begin{algorithm}
\caption{Computation of Robustness Indicator}\label{al:data}
\begin{algorithmic}
\State Initialise $\mathbf{x}$.
\State Initialise $d=1$.
\State Done = \emph{False}
\While{Not Done}
\State Compute Set $\mathcal{X}^d:=\{\mathbf{x}':\|\mathbf{x}-\mathbf{x}'\|= d\}$;
\If{$\exists \mathbf{x}'\in\mathcal{X}^d:Q^*(\mathbf{x}',\Pi^*(\mathbf{x}))\leq V^*(\mathbf{x}')-\alpha$}
\State Done = \emph{True}
\Else{}
$d++$
\EndIf
\EndWhile
\State $\Gamma_\alpha(\mathbf{x})=d-1$
\end{algorithmic}
\end{algorithm}
Observe, in the worst case, Algorithm \ref{al:data} has a complexity of $O(|\mathcal{X}|)$ to compute the function $\Gamma_\alpha(\mathbf{x})$ for a single point $\mathbf{x}$. If this needs to be computed across the entire state-space, it explodes to an operation of worst case complexity $O(|\mathcal{X}|^2)$. In order to compute such functions more efficiently while retaining probabilistic guarantees, we can make use of the Scenario Approach for regression problems \cite{campi2020scenario}. 
\subsection{Learning the Robustness Surrogate with the Scenario Approach}\label{sec:SVR}
The data driven computation of the function $\Gamma_\alpha$ can be proposed in the terms of the following optimization program. Assume we only have access to a uniformly sampled set $\mathcal{X}_S\subset \mathcal{X}$ of size $|\mathcal{X}_S|=S$. Let $\hat{\Gamma}_\alpha^{\theta}$ be an approximation of the real robustness surrogate parametrised by $\theta$. To apply the scenario approach optimization, we need $\hat{\Gamma}_\alpha^{\theta}$ to be convex with respect to $\theta$. For this we can use a Support Vector Regression (SVR) model, and embed the state vector in a higher dimensional space trough a feature non-linear map $\phi(\cdot)$ such that $\phi(\mathbf{x})$ is a feature vector, and we use the kernel $k(\mathbf{x}_1,\mathbf{x}_2)=\langle \phi(\mathbf{x}_1),\phi(\mathbf{x}_2)\rangle$. Let us consider sampled pairs $\{(\mathbf{x}_s,y_s)\}_S$, with $y_s=\Gamma_\alpha(\mathbf{x}_s)$ computed through Algorithm \ref{al:data}. 
Then, we propose solving the following optimization problem with parameters $\tau,\rho>0$:
\begin{equation}\begin{aligned}\label{eq:optimization}
\min_{\theta\in\mathcal{X},\kappa\geq 0,b\in\mathbb{R},\atop
\xi_i\geq 0, i=1,2,...,S}& \left(\kappa+\tau\|\theta\|^{2}\right)+\rho \sum_{i=1}^{S} \xi_{i},\\
s.t.& \quad \left|y_{i}-k( \theta, \mathbf{x}_{i})-b\right|-\kappa \leq \xi_{i}, \quad i=1, \ldots, S.
\end{aligned}
\end{equation}
The solution to the optimization problem \eqref{eq:optimization} yields a trade-off between how many points are outside the \emph{prediction tube} $\left|y-k( \theta^*, \mathbf{x}_{i})-b^{*}\right|<\kappa^{*} $ and how large the tube is (the value of $\kappa^*$). Additionally, the parameter $\rho$ enables us to tune how much we want to penalise sample points being out of the prediction tube. Now take $(\theta^*,\kappa^*,b^*,\xi_i^*)$ as the solution to the optimization problem \eqref{eq:optimization}. Then, the learned robustness surrogate function will be:
\begin{equation*}
\hat{\Gamma}_\alpha^{\theta^*} := k( \theta^*, \mathbf{x}_{i})+b^{*}.
\end{equation*}
From Theorem 3 \cite{campi2020scenario}, it then holds for a sample of points $\mathcal{X}_S$ and a number of outliers $s^*:=|\{(\mathbf{x}, y)\in \mathcal{X}_S:\left|y-k( \theta^*, \mathbf{x})-b^{*}\right|>\kappa^{*} \}|$:
\begin{equation}
\begin{aligned}
\Pr^{S}\left\{\underline{\epsilon}\left(s^{*}\right) \leq \Pr\left\{\mathbf{x}:\left|\Gamma_\alpha(\mathbf{x}) -\hat{\Gamma}_\alpha^{\theta^*}(\mathbf{x}) \right|>\kappa^{*}\right\} \leq \bar{\epsilon}\left(s^{*}\right)\right\} \geq 1-\beta
\end{aligned}
\end{equation}
where $\underline{\epsilon}(s^*):=\max \{0,1-\bar{t}(s^*)\},\, \bar{\epsilon}(s^*):=1-\underline{t}(s^*)$, and $\bar{t}(s^*),\underline{t}(s^*)$ are the solutions to the polynomial
\begin{equation*}\begin{aligned}
\binom{S}{s^*} t^{S-s^*}-\frac{\beta}{2 S} \sum_{i=k}^{S-1}\binom{i}{s^*} t^{i-k}
-\frac{\beta}{6 S} \sum_{i=S+1}^{4 S}\binom{i}{s^*}t^{i-s^*}=0.
\end{aligned}
\end{equation*}

Now observe, in our case we would like $\Gamma_\alpha(\mathbf{x}) \geq \hat{\Gamma}_\alpha^{\theta^*}(\mathbf{x})$ to make sure we are never over-estimating the robustness values. Then, with probability larger than $1-\beta$:
\begin{equation}\label{eq:minprob}\begin{aligned}
\bar{\epsilon}\left(s^{*}\right)\geq& \Pr\left\{\mathbf{x}:\left|\Gamma_\alpha(\mathbf{x}) -\hat{\Gamma}_\alpha^{\theta^*}(\mathbf{x}) \right|>\kappa^{*}\right\}\geq \Pr\left\{\mathbf{x}: \Gamma_\alpha(\mathbf{x}) -\hat{\Gamma}_\alpha^{\theta^*}(\mathbf{x}) <- \kappa^{*}\right\} =\\
=& \Pr\left\{\mathbf{x}: \Gamma_\alpha(\mathbf{x}) < \hat{\Gamma}_\alpha^{\theta^*}(\mathbf{x})-\kappa^{*}\right\}.
\end{aligned}
\end{equation}
Therefore, taking $\|\mathbf{x}_t({i})-\hat{\mathbf{x}}_{t-1}(i)\|_{\infty}> \hat{\Gamma}_\alpha^{\theta^*}(\hat{\mathbf{x}}_{t-1})-\kappa^{*}$ as the condition to transmit state measurements for each agent, we know that the probability of using an over-estimation of the true value $\Gamma_\alpha(\mathbf{x}_t)$ is at most $\bar{\epsilon}\left(s^{*}\right)$ with confidence $1-\beta$.

Then, let $\{\hat{U}_t\}$ be the sequence of joint actions taken by the system. The probability of $U_t$ violating the condition $Q^*(\mathbf{x}_t,U_t)\geq V^*(\mathbf{x}_t)-\alpha$ for any $\mathbf{x}_t\in\mathcal{X}$ is at most $\bar{\epsilon}\left(s^{*}\right)$. Then, we can extend the results from Theorem \ref{the:1} for the case where we use a SVR approximation as a robustness surrogate. Define the worst possible suboptimality gap $\iota := \max_{\mathbf{x},U} |V^*(\mathbf{x})-Q^*(\mathbf{x},U)|$. 
\begin{corollary}\label{cor:1}
Let $\hat{\Gamma}_\alpha^{\theta^*}$ obtained from \eqref{eq:optimization} from collection of samples $\mathcal{X}_S$. Then, a c-MMDP communicating according to Algorithm \ref{al:2} using as trigger condition $\|\mathbf{x}_t({i})-\hat{\mathbf{x}}_{t-1}(i)\|_{\infty}> \hat{\Gamma}_\alpha^{\theta^*}(\hat{\mathbf{x}}_{t-1})-\kappa^{*}$ yields, with probability higher than $1-\beta$:
\begin{equation*}
E_{\mathbf{x}_0}[\sum_{t=0}^{\infty}\gamma^{t}\hat{r}_t]\geq V^*(\mathbf{x}_0)-\delta ,
\end{equation*}
with $\delta :=(\alpha+\bar{\epsilon}\left(s^{*}\right)(\iota-\alpha))\frac{\gamma}{1-\gamma}$.
\end{corollary}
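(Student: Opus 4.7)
The plan is to mimic the proof of Theorem~\ref{the:1}, replacing the deterministic per-step suboptimality gap $\alpha$ by an averaged gap that absorbs the probability that the learned surrogate overestimates the true robustness. First, I would invoke the scenario approach guarantee \eqref{eq:minprob}: with confidence at least $1-\beta$ over the training sample $\mathcal{X}_S$, the ``bad'' set $\mathcal{B}:=\{\mathbf{x}\in\mathcal{X}:\Gamma_\alpha(\mathbf{x})<\hat{\Gamma}_\alpha^{\theta^*}(\mathbf{x})-\kappa^{*}\}$ satisfies $\Pr\{\mathcal{B}\}\leq \bar{\epsilon}(s^{*})$. When $\hat{\mathbf{x}}_{t-1}\notin\mathcal{B}$, the trigger rule of Algorithm~\ref{al:2} enforces $\|\mathbf{x}_t-\hat{\mathbf{x}}_{t-1}\|_{\infty}\leq \hat{\Gamma}_\alpha^{\theta^*}(\hat{\mathbf{x}}_{t-1})-\kappa^{*}\leq \Gamma_\alpha(\hat{\mathbf{x}}_{t-1})$, so Definition~\ref{def:gam} recovers the per-step bound $Q^*(\mathbf{x}_t,\Pi^*(\hat{\mathbf{x}}_t))\geq V^*(\mathbf{x}_t)-\alpha$ of \eqref{eq:lem22}. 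When $\hat{\mathbf{x}}_{t-1}\in\mathcal{B}$ I would fall back on the worst-case definition $\iota:=\max_{\mathbf{x},U}|V^*(\mathbf{x})-Q^*(\mathbf{x},U)|$, which uniformly guarantees $Q^*(\mathbf{x}_t,\Pi^*(\hat{\mathbf{x}}_t))\geq V^*(\mathbf{x}_t)-\iota$.

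Second, combining the two cases by the law of total expectation inflates the one-step inequality \eqref{eq:lem22} to
\begin{equation*}
E_{\mathbf{x}_t}[\hat{r}_t+\gamma V^*(\mathbf{x}_{t+1})]\geq V^*(\mathbf{x}_t)-\bigl[(1-\bar{\epsilon}(s^{*}))\alpha+\bar{\epsilon}(s^{*})\iota\bigr]=V^*(\mathbf{x}_t)-\tilde{\alpha},
\end{equation*}
with $\tilde{\alpha}:=\alpha+\bar{\epsilon}(s^{*})(\iota-\alpha)$. With this single substitution, the telescoping identity \eqref{eq:lem23} and its recursive application \eqref{eq:lem25}–\eqref{eq:lem27} go through verbatim, producing $\tilde{\alpha}$ in every place $\alpha$ appears. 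The geometric sum then yields $E_{\mathbf{x}_0}[\sum_{t=0}^{\infty}\gamma^{t}\hat{r}_t]\geq V^*(\mathbf{x}_0)-\tilde{\alpha}\frac{\gamma}{1-\gamma}=V^*(\mathbf{x}_0)-\delta$, and the outer $1-\beta$ confidence is inherited directly from the scenario approach bound on the training sample.

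The main delicate point will be the measure-theoretic alignment. The scenario approach bound controls $\Pr\{\mathcal{B}\}$ under the sampling distribution used to construct $\mathcal{X}_S$, whereas the visited last-known states $\hat{\mathbf{x}}_t$ along a trajectory are drawn from the closed-loop distribution induced by $\Pi^*(\hat{\mathbf{x}}_t)$; the two measures need not coincide, and a naive union bound over time would also blow up. I expect the corollary to tacitly assume that $\bar{\epsilon}(s^{*})$ remains a valid per-step failure bound along reachable states (either because $\mathcal{X}_S$ is sampled from the stationary distribution, or via an absolute-continuity argument with bounded likelihood ratio). Making this assumption explicit, and confirming that the per-step failure event can be treated pointwise inside the expectation rather than as a joint event over the trajectory, is the only non-routine step; once granted, the rest reduces to the Theorem~\ref{the:1} recursion with $\alpha$ replaced by $\tilde{\alpha}$.
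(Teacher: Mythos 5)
Your proposal is correct and follows essentially the same route as the paper: split each step into a good event (per-step gap $\alpha$, probability $1-\bar{\epsilon}(s^*)$) and a bad event (gap $\iota$, probability $\bar{\epsilon}(s^*)$), average via total expectation to get the effective per-step gap $\alpha+\bar{\epsilon}(s^*)(\iota-\alpha)$, and rerun the recursion of Theorem~\ref{the:1}. The measure-theoretic mismatch you flag --- the scenario bound holds under the sampling distribution of $\mathcal{X}_S$ while the triggered states follow the closed-loop distribution --- is a genuine subtlety, but the paper's own proof silently makes the same pointwise per-step assumption, so your treatment is no weaker than (and more candid than) the original.
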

\begin{proof}
Take expression \eqref{eq:minprob}, and consider the action sequence executed by the c-MMDP to be $\{\hat{U}_t\}_{t=0}^{\infty}$. We can bound the total expectation of the sum of rewards by considering $\{\hat{U}_t\}_{t=0}^{\infty}$ to be a sequence of random variables that produce $Q^*(\mathbf{x}_t,\hat{U}_t)\geq V^*(\mathbf{x}_t)-\alpha$ with probability $1-\bar{\epsilon}\left(s^{*}\right)$, and $Q^*(\mathbf{x}_t,\hat{U}_t)\geq V^*(\mathbf{x}_t)-\iota$ with probability $\bar{\epsilon}\left(s^{*}\right)$. Then,
\begin{equation}\label{eq:the2}\begin{aligned}
E&_{\mathbf{x}_0}[\sum_{t=0}^{\infty}\gamma^{t}\hat{r}_t]=E[E_{\mathbf{x}_0}[\sum_{t=0}^{\infty}\gamma^{t}\hat{r}_t|\{\hat{U}_t\}]]= E[E_{\mathbf{x}_0}[\hat{r}_0 + \gamma \hat{V}(\mathbf{x}_1)|\{\hat{U}_t\}]]=\\
=& E[E_{\mathbf{x}_0}[\hat{r}_0 + \gamma V^*(\mathbf{x}_1)|\{\hat{U}_t\}]+\gamma E_{\mathbf{x}_0}[\hat{V}(\mathbf{x}_1)-V^*(\mathbf{x}_1)|\{\hat{U}_t\}]].
\end{aligned}
\end{equation}
Observe now, for the first term in \eqref{eq:the2}:
\begin{equation}\label{eq:the21}\begin{aligned}
E[&E_{\mathbf{x}_0}[\hat{r}_0 + \gamma V^*(\mathbf{x}_1)|\{\hat{U}_t\}]]\geq(1-\bar{\epsilon}\left(s^{*}\right))(V^*(\mathbf{x}_0)-\alpha)+\\
&+\bar{\epsilon}\left(s^{*}\right)(V^*(\mathbf{x}_0)-\iota)=V^*(\mathbf{x}_0)-\alpha -\bar{\epsilon}\left(s^{*}\right)(\iota-\alpha).
\end{aligned}
\end{equation}
Take the second term in \eqref{eq:the2}, and $\forall \mathbf{x}_1\in\mathcal{X}$ given actions $\{\hat{U}_t\}$ it holds:
\begin{equation*}\begin{aligned}
&E[\hat{V}(\mathbf{x}_1)-V^*(\mathbf{x}_1)|\{\hat{U}_t\}]\geq E[\hat{r}_1+\gamma V^*(\mathbf{x}_2)+\gamma(\hat{V}(\mathbf{x}_2))-V^*(\mathbf{x}_2))-\\
&-V^*(\mathbf{x}_1)|\{\hat{U}_t\}]\geq -\alpha -\bar{\epsilon}\left(s^{*}\right)(\iota-\alpha)+\gamma E[\hat{V}(\mathbf{x}_2)-V^*(\mathbf{x}_2)].
\end{aligned}
\end{equation*}
Therefore, we can write
\begin{equation}\label{eq:the22}\begin{aligned}
\gamma& E[E_{\mathbf{x}_0}[\hat{V}(\mathbf{x}_1)-V^*(\mathbf{x}_1)|\{\hat{U}_t\}]]=\gamma E_{\mathbf{x}_0}[E[\hat{V}(\mathbf{x}_1)-V^*(\mathbf{x}_1)|\{\hat{U}_t\}]]\geq \\
\geq \gamma& \left( -\alpha -\bar{\epsilon}\left(s^{*}\right)(\iota-\alpha)+\gamma E_{\mathbf{x}_0}[E[\hat{V}(\mathbf{x}_2)-V^*(\mathbf{x}_2)]] \right)
\end{aligned}
\end{equation}
At last, substituting \eqref{eq:the21} and \eqref{eq:the22} in \eqref{eq:the2}:
\begin{equation}\label{eq:the23}\begin{aligned}
E_{\mathbf{x}_0}&[\sum_{t=0}^{\infty}\gamma^{t}\hat{r}_t]\geq V^*(\mathbf{x}_0)-(\alpha+\bar{\epsilon}\left(s^{*}\right)(\iota-\alpha))\frac{\gamma}{1-\gamma}.
\end{aligned}
\end{equation}
\qed
\end{proof}
We can interpret the results of Corollary \ref{cor:1} in the following way.
When using the exact function $\Gamma_\alpha$, the sequence of actions produced ensures that, at all times, an action is picked such that the expected sum of rewards is always larger than some bound close to the optimal. When using the approximated $\hat{\Gamma}_\alpha^{\theta^*}$, however, we obtain from the scenario approach a maximum probability of a real point not satisfying the design condition: $\|\mathbf{x}-\mathbf{x}'\|\leq \hat{\Gamma}_\alpha^{\theta^*}-\kappa^* \wedge Q^*(\mathbf{x}',\Pi^*(\mathbf{x}))<  V^*(\mathbf{x}')-\alpha$. When this happens during the execution of the c-MMDP policies it means that the agents are using delayed state information for which they do not have guarantees of performance, and the one-step-ahead value function can deviate by the worst sub-optimality gap $\iota$.
\section{Experiments}\label{sec:exp}
We set out now to quantify experimentally the impact of Algorithm \ref{al:2} on the performance and communication requirements of a benchmark c-MMDP system. First of all, it is worth mentioning that the comparison of the proposed algorithm with existing work is not possible since, to the best of our knowledge, no previous work has dealt with the problem of reducing communication when executing learned policies in a c-MMDP system. For this reason, the results are presented such that the performance of different scenarios (in terms of different $\Gamma_\alpha$ functions) is compared with the performance of an optimal policy with continuous communication.
\subsection{Benchmark: Collaborative Particle-Tag}
We evaluate the proposed solution in a typical particle tag problem (or predator-prey) \cite{lowe2017multi}. We consider a simple form of the problem with 2 predators and 1 prey. The environment is a $10\times 10$ arena with discrete states, and the predators have the actions $\mathcal{U}_i =\{\text{up},\text{down},\text{left},\text{right},\text{wait}\}$ available at each time step and can only move one position at a time. The environment has no obstacles, and the prey can move to any of the 8 adjacent states after each time step. The predators get a reward of 1 when, being in adjacent states to the prey, \emph{both} choose to move into the prey's position (tagging the prey). They get a reward of $-1$ when they move into the same position (colliding), and a reward of 0 in all other situations. A representation of the environment is presented in Figure \ref{fig:1}.
\begin{figure}
     \centering
         \includegraphics[width=0.35\linewidth]{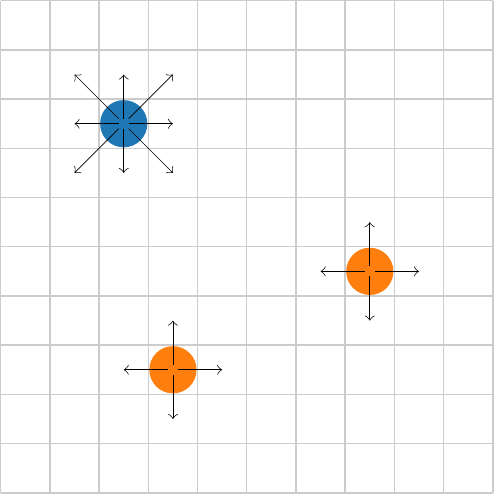}
  \caption{Particle Tag, Predators in orange, Prey in blue.}
        \label{fig:1}
\end{figure}
The global state is then a vector $\mathbf{x}_t \in \{0,1,2,...,9\}^6$, concatenating the $x,y$ position of both predators and prey. For the communication problem, we assume each agent is only able to measure its own position and the prey's. Therefore, in order to use a joint state based policy $\pi_i:\{0,1,2,...,9\}^6\to\mathcal{U}$, at each time-step predators are required to send its own position measurement to each other.
\subsection{Computation of Robustness Surrogates}
With the described framework, we first compute the optimal $Q^*$ function using a fully cooperative vanilla $Q$-learning algorithm \cite{busoniu2008comprehensive}, by considering the joint state and action space, such that $Q:\{0,1,2,...,9\}^6\times \mathcal{U}^2\to\mathbb{R}$. The function was computed using $\gamma=0.97$. We then take the joint optimal policy as $\Pi^*(\mathbf{x})=\operatorname{argmax}_{U}Q^*(\mathbf{x},U)$, and load in each predator the corresponding projection $\pi^*_i$. To evaluate the trade-off between expected rewards and communication events, we compute the function $\hat{\Gamma}_\alpha$ by solving an SVR problem as described in \eqref{eq:optimization} for different values of sensitivity $\alpha$. Then, the triggering condition for agents to communicate their measurements is $\|\mathbf{x}_t({i})-\hat{\mathbf{x}}_{t-1}(i)\|_{\infty}> \hat{\Gamma}_\alpha^{\theta^*}(\mathbf{x})-\kappa^{*}$.

The hyper-parameters for the learning of the SVR models are picked through heuristics, using a sample of size $S=10^4$ to obtain reasonable values of mis-predicted samples $s^*$ and regression mean-squared error scores. Note that $S=\frac{1}{100}|\mathcal{X}|$. To estimate the values $\overline{\epsilon}(s^*)$, a coefficient of $\beta = 10^{-3}$ was taken, and the values were computed using the code in \cite{garatti2019risk}. For more details on the computation of $\rho$-SVR models (or $\mu-$SVR)\cite{scholkopf1998shrinking} see the project code\footnote{https://github.com/danieljarne/Event-Driven-MARL}.

Figure \ref{fig:embed} shows a representation of the obtained SVR models for different values of $\alpha$, plotted over a 2D embedding of a subset of state points using a t-SNE \cite{van2008visualizing} embedding. It can be seen how for larger $\alpha$ values, more ``robust" regions appear, and with higher values of $\Gamma_\alpha$. This illustrates how, when increasing the sensitivity, the obtained approximated $\hat{\Gamma}_{\alpha}^{\theta^*}$ take higher values almost everywhere in the state space, and form clusters of highly robust points. 
\begin{figure}
     \centering
         \includegraphics[width=0.7\linewidth]{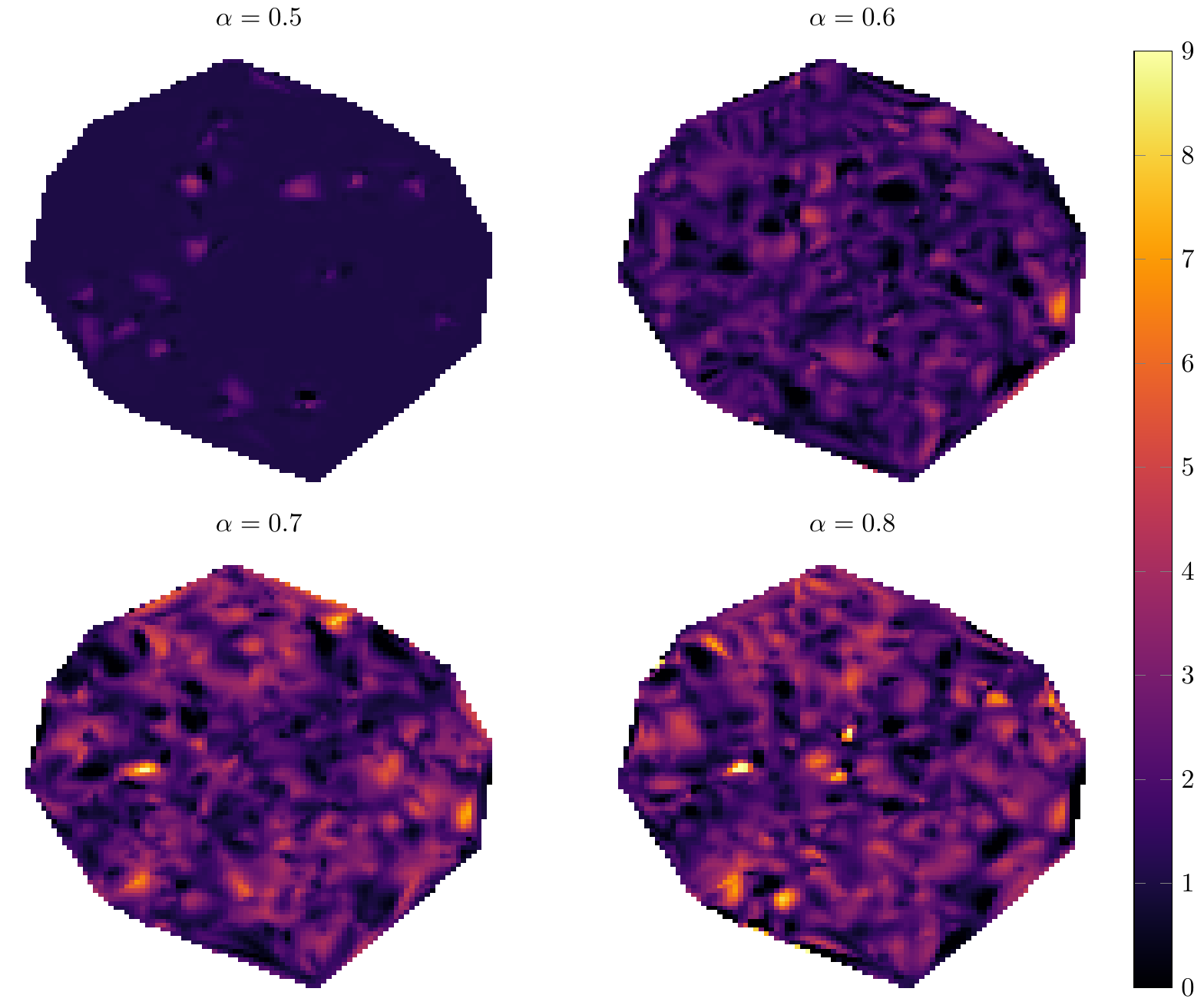}
  \caption{Obtained $\hat{\Gamma}_{\alpha}^{\theta^*}$ models on a 2D embedding.}
        \label{fig:embed}
\end{figure}
\subsection{Results}
The results are presented in Table \ref{tab:results}. We simulated in each case $1000$ independent runs of $100$ particle tag games, and computed the cumulative reward, number of communication events and average length of games. For the experiments, $\hat{E}[\cdot]$ is the expected value approximation (mean) of the cumulative reward over the $1000$ trajectories, and in every entry we indicate the standard deviation of the samples (after $\pm$). We use $\mathcal{T}_\alpha$ as the generated trajectories (games) for a corresponding parameter $\alpha$, $h(\mathcal{T}_{\alpha})$ as the total sum of communication events per game for a collection of games, and $\overline{g}:=\sum_{g\in\mathcal{T}_{\alpha}}\frac{|g|}{|\mathcal{T}_{\alpha}|}$ as the average length of a game measured over the collected $\mathcal{T}_\alpha$. For the obtained $Q^*$ function, the worst optimality gap is computed to be $\iota = 1.57$.

\begin{table}[h]\centering
\begin{tabular}{ |c|c|c|c|c|c|c|c| }
\hline
 $\alpha$& $\hat{E}[\sum_{t=0}^{\infty}\gamma^{t}r]$ & $\overline{g}$ & $h(\mathcal{T}_{\alpha})$ & $\frac{h(\mathcal{T}_{\alpha})}{\overline{g}}$&$\overline{\epsilon}(s^*)$& $\delta$\\
\hline
$0$ & $\mathbf{2.72}\pm 0.50$ & $\mathbf{10.72}\pm 0.44$ & $\mathbf{21.49}\pm 0.89$ & $\mathbf{2.00}$ & -& -\\
  \hline
$0.4$ & $\mathbf{2.72}\pm 0.53$ & $\mathbf{10.77}\pm 0.44$ & $\mathbf{19.13}\pm 0.87$ & $\mathbf{1.78}$ & $0.079$& $16.33$\\
 \hline
$0.5$ & $\mathbf{1.62}\pm 0.92$ & $\mathbf{12.45}\pm 0.58$ & $\mathbf{16.75}\pm 0.85$ & $\mathbf{1.35}$ & $0.148$ & $22.39$\\
  \hline
$0.6$ & $\mathbf{0.99}\pm 1.09$ & $\mathbf{13.71}\pm 0.71$ & $\mathbf{14.49} \pm 0.87$ & $\mathbf{1.06}$ & $0.205$ & $26.61$\\
  \hline
 $0.7$ & $\mathbf{0.93}\pm 1.08$ & $\mathbf{13.74}\pm 0.69$ &$\mathbf{14.09}\pm 0.85$ & $\mathbf{1.03}$ & $0.117$ & $26.85$\\
  \hline
     $0.8$ & $\mathbf{0.74}\pm 1.10$ & $\mathbf{14.65}\pm 0.80$ & $\mathbf{14.11} \pm 0.87$ & $\mathbf{0.96}$& $0.075$ & $28.10$\\
  \hline
 $0.9$ & $\mathbf{0.64}\pm 1.08$ & $\mathbf{14.82}\pm 0.83$ & $\mathbf{14.33} \pm 0.88$ &$\mathbf{0.96}$& $0.097$ & $31.69$\\
  \hline
\end{tabular}
\vspace{2mm}
\caption{\label{tab:results} Simulation results}
\end{table}
Let us remark the difference between the 4th and 5th column in Table \ref{tab:results}. The metric $h(\mathcal{T}_{\alpha})$ is a direct measure of amount of messages for a given value of $\alpha$. However, note that we are simulating a fixed number of games, and the average number of steps per game increases with $\alpha$: the lack of communication causes the agents to take longer to solve the game. For this reason we add the metric $h(\mathcal{T}_{\alpha})/\overline{g}$, which is a measure of total amount of messages sent versus amount of simulation steps for a fixed $\alpha$ (i.e. total amount of steps where a message \emph{could} be sent). Broadly speaking, $h(\mathcal{T}_{\alpha})$ compares raw amount of information shared to solve a fixed amount of games, and $h(\mathcal{T}_{\alpha})/\overline{g}$ compares amount of messages per time-step (information transmission rate). Note at last that there are two collaborative players in the game, therefore a continuous communication scheme would yield $h(\mathcal{T}_{\alpha})/\overline{g}=2$.

From the experimental results we can get an qualitative image of the trade-off between communication and performance. Larger $\alpha$ values yield a decrease in expected cumulative reward, and a decrease in state measurements shared between agents. Note finally that in the given c-MMDP problem, the minimum reward every time step is $\min r(\mathbf{x}_t,U,\mathbf{x}_{t+1})=-1$, therefore a lower bound for the cumulative reward is $E[\sum_{t=0}^{\infty}\gamma^{t}r]\geq -1\frac{\gamma}{1-\gamma}=-32.333$. Then, the performance (even for the case with $\alpha=0.9$ remains relatively close to the optimum computed with continuous communication.

At last, let us comment on the general trend observed regarding the values of $\alpha$. Recall the bound obtained in Corollary \ref{cor:1}, and observe that for $\alpha=0.5\Rightarrow \delta=22.39$. On average, $\hat{E}[V^*(\mathbf{x}_0)]\approx 2.72$ when initialising $\mathbf{x}_0$ at random (as seen on Table \ref{tab:results}). This yields a quite conservative bound of $\hat{E}[V^*(\mathbf{x}_0)]-\delta = -19.67$ on the expected sum of rewards, while the communication events are reduced by around $22\%$ due to the conservative computation of $\Gamma_{\alpha}$. One first source of conservativeness is in Algorithm \ref{al:2}. When computing the exact value $\Gamma_{\alpha}(\mathbf{x})=d$, it requires every point $\mathbf{x}':\|\mathbf{x}-\mathbf{x}'\|_{\infty}\leq d$ to satisfy the condition in Definition \ref{def:gam}. The number of states to be checked grows exponentially with $d$, and many of those states may not even be reachable from $\mathbf{x}$ by following the MDP transitions. Therefore we are effectively introducing conservativeness in cases where probably, for many points $\mathbf{x}$, we could obtain much larger values $\Gamma_{\alpha}(\mathbf{x})$ if we could check the transitions in the MDP. Another source of conservativeness comes from the SVR learning process and in particular, the values of $\kappa^*$. Since the states are discretised, $\|\mathbf{x}_t({i})-\hat{\mathbf{x}}_{t-1}(i)\|_{\infty}\in \{0,1,2,3,...,10\}$. Therefore, the triggering condition is effectively constrained to $\|\mathbf{x}_t({i})-\hat{\mathbf{x}}_{t-1}(i)\|_{\infty}> \lfloor \hat{\Gamma}_\alpha^{\theta^*}(\mathbf{x})-\kappa^{*}\rfloor$, which makes it very prone to under-estimate even further the true values of $\Gamma_{\alpha}(\mathbf{x})$. Additionally, for most SVR models we obtained predictions $\hat{\Gamma}_\alpha^{\theta^*}(\mathbf{x})-\kappa^{*}$ that are extremely close to the real value, so small deviations in $\kappa^*$ can have a significant impact in the number of communications that are triggered ``unnecessarily''.
\section{Discussion}
We have presented an approach to reduce the communication required in a collaborative reinforcement learning system when executing optimal policies in real time, while guaranteeing the discounted sum of rewards to stay within some bounds that can be adjusted through the parameters $\alpha$ and $\overline{\epsilon}(s^*)$ (this last one indirectly controlled by the learning of data driven approximations $\hat{\Gamma}_\alpha^{\theta^*}$). The guarantees were first derived for the case where we have access to \emph{exact} robustness surrogates $\Gamma_\alpha$, and extended to allow for surrogate functions learned through a \emph{scenario approach} based SVR optimization. In the proposed experiments for a 2-player particle tag game the total communication was reduced between $10\%-44\%$ and the communication rate by $12\%-52\%$, while keeping the expected reward sum $\hat{E}[\sum_{t=0}^{\infty}\gamma^{t}r]\in [0.68,2.76]$.

The computation of the values $\Gamma_{\alpha}(\mathbf{x})$ and the learning of the SVR models for $\hat{\Gamma}_\alpha^{\theta^*}(\mathbf{x})$ introduced significant conservativeness with respect to the theoretical bounds. A possible improvement for future work could be to compute the true values $\Gamma_{\alpha}(\mathbf{x})$ through a Monte-Carlo based approach by sampling MDP trajectories. This would yield a much more accurate representation of how ``far'' agents can deviate without communicating, and the guarantees could be modified to include the possibility that the values $\Gamma_{\alpha}(\mathbf{x})$ are correct up to a certain probability. Another option would be to compute $\Gamma_{\alpha}(\mathbf{x})$ using a different topology through embedding $\mathbf{x}$ in some higher dimensional space. At last, we can come back now to the statements in Remark \ref{rem:1}. It is now evident how having a certain physical structure in the MDP (\emph{i.e.} transition probabilities being larger for states closer in space) would help mitigate the conservativeness. An MDP with large transition jumps with respect to the sup-norm will result in more conservative and less meaningful robustness surrogates.

Other problems that branch out of this work are the implications of learning such robustness surrogate functions. These functions could be used to modify the agent policies, to sacrifice performance in favour of robustness versus communication faults or attacks. Finally, it would be insightful to compare the approaches presented in this work with ideas in the line of \cite{foerster2016learning}, where we can incorporate the communication as a binary action (to communicate or not) into the $Q$ learning algorithm, to optimise simultaneously with the sum of rewards.
\section*{Acknowledgements}
The authors want to thank Gabriel Gleizer, Giannis Delimpaltadakis and Andrea Peruffo for the useful and insightful discussions related to this work.
\bibliographystyle{ACM-Reference-Format}
\bibliography{ET_MARL.bib}
\clearpage
\appendix
\section{Appendix}
\subsection{Data driven computation of Robustness Surrogates}\label{apx:data}
The experimental results in Section \ref{sec:exp} include the computation of $\rho$-SVR models (or $\mu-$SVR)\cite{scholkopf1998shrinking}, and take advantage of the \emph{probably approximately correct} guarantees derived from the scenario approach optimization \cite{campi2020scenario} applied to adjustable size SVR models. Each value of $\alpha$ in Table \ref{tab:results} entails the solution of a scenario approach optimization using a random sample of points $\mathcal{X}_S=\{(\mathbf{x},\Gamma_{\alpha} (\mathbf{x}) )\}$ of size $S=10^{5}$. For each SVR model the hyper-parameters were tuned to obtain comparable results between them, considering that the target values $\Gamma_{\alpha}$ produced quite different results depending on the considered $\alpha$. In Table \ref{tab:svr} we collect some more details on the computed SVR models used to approximate the function $\Gamma_{\alpha}$. The $R^2$ score is computed over the training set $\mathcal{X}_S$. All the SVR models were computed using \emph{scikit-learn}'s $\mu$-SVR implementation \cite{scikit-learn}.
\begin{table}[h]\centering
\begin{tabular}{ |c|c|c|c|c|c|c|c| }
\hline
$\alpha$& $\rho$&$\tau$ &$\kappa^*$ & $R^2$ &$\frac{s^*}{S}$\\
\hline
$0.4$ & $0.01$ & $100$ & $0.00067$ & $0.999$ & $0.067$\\
\hline
$0.5$ & $0.01$ & $100$ & $0.0510$ & $0.907$ & $0.132$ \\
 \hline
$0.6$ & $0.1$ & $100$ & $0.0350$ & $0.995$ & $0.187$\\
 \hline
$0.7$ & $0.1$ & $100$ &$0.0340$ & $0.998$ & $0.103$\\
 \hline
    $0.8$ & $0.1$ & $100$ & $0.0131$ & $0.998$& $0.063$ \\
 \hline
$0.9$ & $0.1$ & $100$ & $0.01485$ &$0.999$& $0.084$ \\
 \hline
\end{tabular}
\vspace{2mm}
\caption{\label{tab:svr} SVR hyper-parameters}
\end{table}
Please note that the \emph{scikit-learn} implementation uses different nomenclature for the parameters compared to \cite{campi2020scenario} (the equivalence is $\rho = \mu$, $\tau = C$, $\kappa =\epsilon$). All the SVR models were computed with a radial basis function kernel due to the fact that, intuitively, similar state vectors would yield similar robustness values. The obtained prediction scores $R^2$ are relatively high, which could indicate some over-fitting in the models. However, this does not seem to be a problem in the implementation since the sample of training points is large enough compared to the complexity of the function $\Gamma_\alpha$, and the fact that the approximations are very conservative. 
\subsection{Experimental Results Extension}
We present in Figure \ref{fig:rewscoms} the detailed experimental results obtained in Section \ref{sec:exp}.
\begin{figure}
    \centering
        \includegraphics[width=0.49\linewidth]{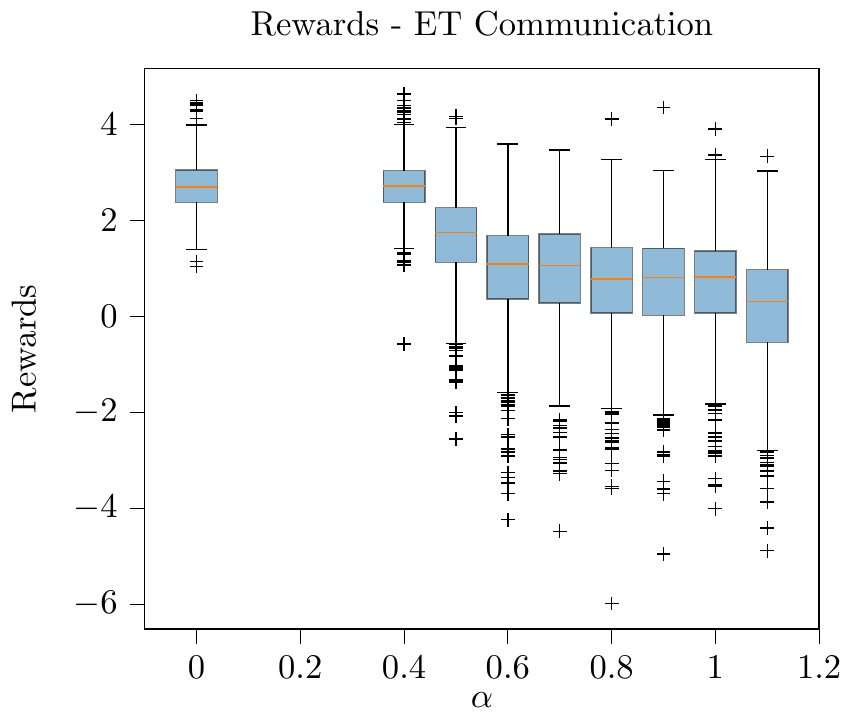}
        \includegraphics[width=0.49\linewidth]{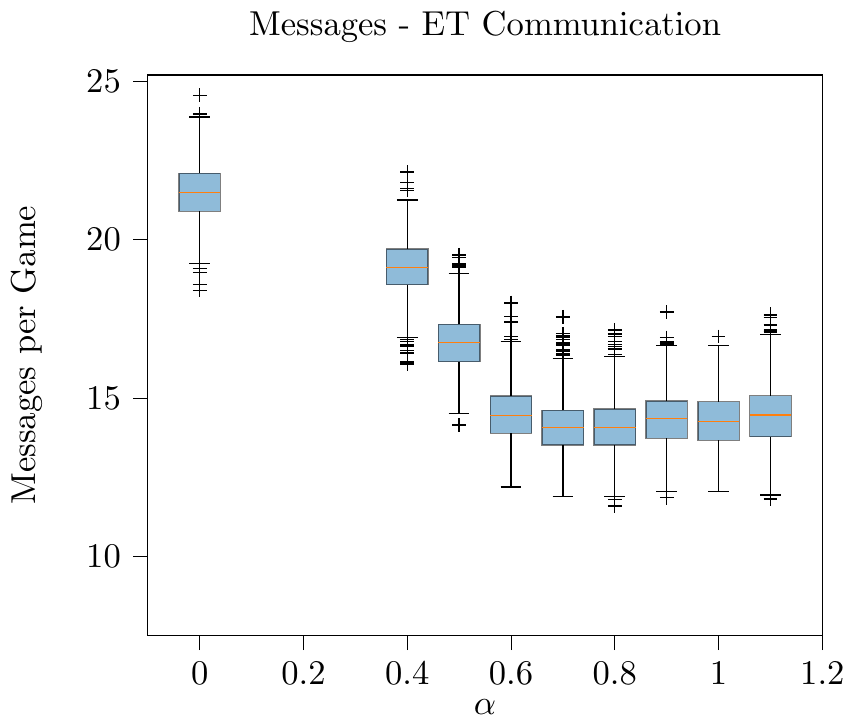}
          \caption{Cumulative rewards and Messages per game in particle tag.}
       \label{fig:rewscoms}
\end{figure}
It can be seen in Figure \ref{fig:rewscoms} how, as we increase the threshold $\alpha$, the variance in the rewards obtained increases severely. We must remark that the results presented are computed for 1000 independent games, initialised \emph{at random}. Therefore, the results in Figure \ref{fig:rewscoms} point out that the proposed event-triggered communication strategy affects the results for some initial states much more than others. In the case of the amount of messages per game, the variance does is not that heavily affected by the values of $\alpha$, and we can observe how the amount of messages plateaus at a reduction of $\approx 40\%$. This is due to the fact (pointed out in Section \ref{sec:exp}) that even though the amount of messages per time-step keeps dropping (see Table \ref{tab:results}), the amount of steps necessary to solve a particle tag game increases as the policies become less accurate.
\end{document}